\documentclass[conference]{IEEEtran}
\IEEEoverridecommandlockouts

\usepackage{amsmath,amssymb,amsfonts}
\usepackage{graphicx}
\usepackage{xcolor}
\usepackage{array}
\usepackage{cite}

\newif\ifauthornotes
\authornotesfalse

\newtheorem{definition}{Definition}
\newtheorem{lemma}{Lemma}
\newtheorem{proposition}{Proposition}
\newtheorem{theorem}{Theorem}
\newtheorem{corollary}{Corollary}
\newtheorem{assumption}{Assumption}

\newcommand{\bbE}{\mathbb{E}}
\newcommand{\cS}{\mathcal{S}}
\newcommand{\cN}{\mathcal{N}}
\newcommand{\cT}{\mathcal{T}}
\newcommand{\cH}{\mathcal{H}}
\newcommand{\cE}{\mathcal{E}}
\newcommand{\Leaf}{\mathrm{L}}
\newcommand{\TM}{\mathrm{TM}}
\DeclareMathOperator{\Var}{Var}

\begin{document}

\title{Reputation-Graded Wavelet-Packet Subtree\\ Allocation for IoT Sensor Networks}

\author{\IEEEauthorblockN{Surya Jayakumar\IEEEauthorrefmark{1},
Arun Narayanan\IEEEauthorrefmark{2},
Indrakshi Dey\IEEEauthorrefmark{1}}
\IEEEauthorblockA{\IEEEauthorrefmark{1}South East Technological University, Waterford, Ireland}
\IEEEauthorblockA{\IEEEauthorrefmark{2}LUT, Lappeenranta, Finland}
\IEEEauthorblockA{Email: surya.jayakumar@setu.ie, arun.narayanan@lut.fi, indrakshi.dey@setu.ie}}

\maketitle

\begin{abstract}
Access control in Internet of Things (IoT) sensor networks usually acts
above the physical layer: the fusion centre decides which reports to
accept, while the radio of a compromised node keeps occupying the shared
channel. This paper places the entitlement in the waveform: wavelet
packet division multiplexing (WPDM) builds the channel as a binary tree
of orthogonal pulses, and each sensor's reputation sets the depth of the
subtree it may use: trusted sensors hold wide subbands, and low-trust
sensors hold narrow subbands that the fusion centre withdraws one at a
time. We prove that the allocation isolates sensors exactly under
perfect timing, and that under a timing offset the energy leaking out of
the lowpass leaf falls fourfold per tree level. The locally optimum
detector of out-of-subtree transmission in impulsive Class-A noise
weights each leaf in proportion to this predicted leakage. We bound the
time needed to confine a compromised node and derive the optimal tree
depth. In simulation, with demoted nodes radiating within their
allocation, the scheme lowers the probability of a false global decision
by $68\%$ relative to fixed-allocation WPDM and to OFDM with the same
reputation tiers, and by $44\%$ relative to cryptographic
re-attestation, which leads once more than $24\%$ of the sensors are
compromised.
\end{abstract}

\begin{IEEEkeywords}
IoT sensor networks, industrial IoT, access control, wavelet packet
division multiplexing, decision fusion, impulsive noise.
\end{IEEEkeywords}

\section{Introduction}
\label{sec:intro}

Industrial sensor networks increasingly carry safety-critical decisions.
Groups of low-cost sensors send one-bit local decisions to a decision
fusion centre (DFC) over a shared wireless channel, and the DFC combines
them into a global decision for each monitored event, such as a liquid leak, motor overheating, or the presence of hazardous gas~\cite{dey2022wpdm}. The
channel is shadowed and carries thermal noise and impulsive Class-A
noise from switching equipment~\cite{middleton1977statistical}. Therefore, a
single compromised sensor threatens every group at once: its reports
mislead its own group, and its radio transmits on the subbands of other
sensors and corrupts their honest reports. Access control today acts on the first threat and leaves the second in place \cite{pirayesh2022jamming}, \cite{yao2025}. Upper-layer schemes decide which reports the DFC accepts, while the
compromised radio keeps occupying the channel even after its
credentials are revoked.

Three trends make this problem urgent now. Industrial networks are
adopting zero-trust architectures, which verify every device
continuously~\cite{wan2025continuous}, and these architectures now encompass field sensors whose power budget rules out per-packet cryptography. At the same time, the number of wireless sensors per plant keeps increasing, so that each compromised node shares the channel with many more honest nodes and corrupts more of
them. Finally, sixth-generation (6G) research moves resource allocation
toward agent-based decisions that carry trust
scores~\cite{ghosh2026macro}, which calls for access decisions that the
physical layer itself expresses and verifies.

The state of the art covers parts of this problem. Upper-layer access
control, such as continuous authentication for zero-trust industrial
IoT~\cite{wan2025continuous} and blockchain-based access control with
fast revocation~\cite{ellouze2025lightweight}, grants binary credentials
and checks them through cryptographic exchanges. Physical-layer
authentication identifies a transmitter from channel or hardware
features~\cite{xie2021pla}, and spectrum-misuse detection checks a
permit that the transmitter embeds in its signal~\cite{jin2018specguard};
both verify who transmits and treat the right to a band as binary.
Wavelet Packet Division Multiplexing (WPDM)~\cite{lindsey1997wpm,wong1997wpdm,wong2000wpdm,dey2022wpdm}
divides the band into orthogonal wavelet-packet subbands with high
spectral efficiency and noise immunity, and assumes that every node
transmits where instructed. Reputation-based and Byzantine-resilient
fusion~\cite{ghosh2026macro,su2016byzantine} grade each node based on trust
and acts on its reports once the DFC has received them.

To the best of our knowledge, this paper is the first to tie a graded
trust score to the physical spectrum that a node may use and to verify compliance from signals that the receiver already computes. The WPDM tree makes this possible because it has a hierarchical structure: a node at depth
$j$ holds bandwidth $2^{-j}W$ of the total bandwidth $W$ and sends
symbols of duration $2^{j}T_{0}$, where $T_{0}$ is the root symbol
period. Thus, the tree depth forms a ladder of nested bandwidth–time allocations that can support the implementation of a graded access policy. However, three gaps in the existing literature have so far hindered the development of such a graded access policy. The WPDM literature optimizes the tree for capacity and treats compliance as given, so its correlator bank has served decoding alone. The security literature treats the physical layer as a pipe, so that its trust scores act on data and leave the spectrum fixed. Combining the two requires new analyses: a timing offset spreads a node's energy into neighbouring leaves, and it must be shown that this leakage follows a law set by tree depth and that the optimal detector in impulsive noise weights each leaf according to this leakage. Then, the allocation and the detector become two views of one physical quantity. The advantage is practical: one broadcast of the allocation table withdraws an entitlement, verification adds zero pilots and no sensor-side cost, and the evidence survives the impulsive noise
that degrades the energy detector.

The primary contribution of this paper is the depth-graded allocation itself, in which
reputation selects subtree depth. We prove that it isolates sensors exactly under perfect timing (Lemma~\ref{lem:isolation}) and that, under a timing offset, the leakage of the lowpass leaf falls fourfold per level while every leaf leaks, at most, the slope energy of its pulse (Proposition~\ref{prop:leak}). The second contribution is the detector: we derive the locally optimum test for out-of-subtree transmission in Class-A noise, show that its optimal weights equal the predicted leakage
profile (Proposition~\ref{prop:weights}), and give conditions for a unique optimal tree depth (Theorem~\ref{thm:depth}). The third contribution closes the loop: trimmed-mean reputation limits the bias introduced by colluding reporters to the spread among honest ones (Theorem~\ref{thm:bias}), and the time to confine a compromised node has a closed-form bound (Theorem~\ref{thm:confine}). Simulations against four state-of-the-art alternatives confirm each result.

\section{System Model}
\label{sec:model}

\subsection{Network and wavelet packet tree}

$K$ sensors in $G$ groups send local binary decisions
$u_{k}\in\{-1,+1\}$ to the DFC over a channel of bandwidth $W$. Let
$\phi(t)$ be a unit-energy scaling function whose translates
$\phi(t-nT_{0})$ are orthonormal, with root-mean-square (RMS) bandwidth
$B_{0}$ and lowpass and highpass filters $h[n]$ and $g[n]$. With
$\phi_{0,0}=\phi$ and $T_{j}=2^{j}T_{0}$, the tree grows by the
two-scale recursion~\cite{akansu2000multiresolution}
\begin{equation}
\begin{aligned}
\phi_{j+1,2m}(t)&=\textstyle\sum_{n}h[n]\,\phi_{j,m}(t-nT_{j}),\\
\phi_{j+1,2m+1}(t)&=\textstyle\sum_{n}g[n]\,\phi_{j,m}(t-nT_{j}),
\end{aligned}
\label{eq:basis}
\end{equation}
for $0\le j<J$ and $0\le m<2^{j}$. Each split halves the bandwidth and
doubles the symbol period: node $(j,m)$ holds bandwidth $2^{-j}W$ and
carries symbols on the translates $\phi_{j,m}(t-nT_{j})$. Along the
lowpass branch, the recursion reduces to the dilation
$\phi_{j,0}(t)=2^{-j/2}\phi(2^{-j}t)$. An admissible pruning $\cT$ is a
set of nodes whose subbands tile $[0,W]$; its leaves $\Leaf(\cT)$, with
their translates, form an orthonormal set. The symbol $\ell$ denotes a
leaf $(j,m)$.

\subsection{Time-frequency frame and depth-graded allocation}

Each fusion interval is a frame of $S$ slots of duration $T_{J}$, each
carrying one copy of the tree, so the frame holds $S\,2^{J}$ cells: a
cell is one depth-$J$ node in one slot, of area $WT_{0}$ at every
depth.

\begin{definition}[Depth-graded allocation]
\label{def:graded}
An allocation gives a sensor $k$ a set $\cS_{k}$ of (node, slot) pairs,
disjoint across sensors; sensor $k$ is entitled to transmit on a node in
a slot exactly when the pair lies in $\cS_{k}$. Let $\xi_{k}\in[0,1]$ be
the reputation of sensor $k$ and let $0=\theta_{0}<\dots<\theta_{L}=1$
define $L$ tiers, with tier index
$q(\xi_{k})=\max\{i<L:\theta_{i}\le\xi_{k}\}$. The allocation is
depth-graded when $\cS_{k}$ holds $2^{q}$ nodes at depth $J-q$, with
$J\ge L$ so that every allocated node is a proper subband.
\end{definition}

A node at depth $J-q$ covers $2^{q}$ cells, so a sensor in tier $q$ holds
$A(q)=4^{q}$ cells and loses them one node at a time, in steps of
bandwidth $2^{-(J-q)}W$. In plain terms, a trusted sensor holds several
wide subbands, and a low-trust sensor holds one narrow subband that the
DFC withdraws on its own. New sensors start in tier~$0$. Node sizes are
powers of two, so packing the nodes into the largest slots first leaves
zero gaps, and an allocation is feasible whenever
$\sum_{k}4^{q_{k}}\le S\,2^{J}$. A frame of duration
$T_{F}=K4^{L-1}T_{0}$, that is $S=K4^{L-1}2^{-J}$ slots, admits every
tier assignment at every depth.

\subsection{Received signal and noise}

Sensor $k$ transmits
$s_{k}(t)=\sum_{\ell\in\cS_{k}}a_{k,\ell}\phi_{\ell}(t-\tau_{k})$, where
$a_{k,\ell}$ carries its decision and $\tau_{k}$ is its residual timing
offset. The DFC receives
\begin{equation}
r(t)=\textstyle\sum_{k=1}^{K}h_{k}s_{k}(t)+n(t),
\label{eq:rx}
\end{equation}
with log-normal shadowing gain $h_{k}$, constant over a fusion interval.
The DFC correlates $r(t)$ with every leaf, $y_{\ell}=\langle
r,\phi_{\ell}\rangle$ for all $\ell\in\Leaf(\cT)$, including leaves held
by other sensors and free leaves. This correlator bank is already part
of the WPDM receiver, so the detector of Section~\ref{sec:detect} adds
zero pilot overhead. The noise follows the Middleton Class-A
model~\cite{middleton1977statistical}
\begin{equation}
p(n)=\sum_{i=0}^{Z}\frac{e^{-A}A^{i}}{i!}
\frac{e^{-n^{2}/2\sigma_{i}^{2}}}{\sqrt{2\pi\sigma_{i}^{2}}},
\quad
\sigma_{i}^{2}=\sigma^{2}\frac{i/A+\Gamma}{1+\Gamma},
\label{eq:classa}
\end{equation}
where $A$ is the mean number of noise bursts per symbol, $\Gamma$ the
ratio of Gaussian to impulsive power and $\sigma^{2}$ the total noise
power. We model the leaf-domain noise
$n_{\ell}=\langle n,\phi_{\ell}\rangle$ as independent across leaves
with law~\eqref{eq:classa}, a modelling assumption that keeps the
detector tractable.

\subsection{Isolation and leakage}

\begin{lemma}[Exact isolation]
\label{lem:isolation}
If $\tau_{k}=0$ for all $k$, then
$y_{\ell}=h_{k(\ell)}a_{k(\ell),\ell}+n_{\ell}$ for every leaf $\ell$
with owner $k(\ell)$. Any component of another sensor in $y_{\ell}$
therefore marks a transmission outside that sensor's entitlement.
\end{lemma}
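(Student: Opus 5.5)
The plan is to substitute the received signal \eqref{eq:rx} with $\tau_{k}=0$ into the correlator output and use linearity of the inner product:
\[
y_{\ell}=\langle r,\phi_{\ell}\rangle=\textstyle\sum_{k=1}^{K}h_{k}\sum_{\ell'\in\cS_{k}}a_{k,\ell'}\langle\phi_{\ell'},\phi_{\ell}\rangle+n_{\ell}.
\]
The statement then reduces to the orthonormality claim $\langle\phi_{\ell'},\phi_{\ell}\rangle=\delta_{\ell\ell'}$. Here $\ell$ and $\ell'$ are understood as (node, slot) pairs, and each slot's translate of the pulse is included in $\phi_{\ell}$. The noise term is $n_{\ell}$ by definition, and the channel gains $h_{k}$ are scalars constant over the interval, so they pass straight through.

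To establish orthonormality I split into cases.

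\emph{Same slot, distinct leaves of $\Leaf(\cT)$.} I invoke the statement in the system model that the leaves of an admissible pruning, together with their translates by multiples of $T_{j}$, form an orthonormal set. This is the standard wavelet-packet result, obtained from the two-scale recursion \eqref{eq:basis} and the conjugate quadrature relations between $h$ and $g$. If the allocated nodes sit at different depths $J-q$ for different sensors, the tree $\cT$ used in a slot must be taken as a common admissible pruning containing every allocated node of that slot. The disjointness required by Definition~\ref{def:graded} ensures that no node is an ancestor of another, because nodes are packed without overlap into the tiling.

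\emph{Different slots.} Slot $s$ uses the translates $\phi_{j,m}(t-nT_{j})$ whose offsets $nT_{j}$ fall in slot $s$, and $T_{J}$ is an integer multiple of every $T_{j}$ with $j\le J$. Pulses in different slots are therefore distinct integer translates within the same orthonormal family, and their inner products vanish.

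Combining the two cases, the only surviving term in the sum is $\ell'=\ell$, with $k=k(\ell)$ the unique owner, which exists by disjointness. This gives $y_{\ell}=h_{k(\ell)}a_{k(\ell),\ell}+n_{\ell}$; for a free leaf the signal term is absent. For the second sentence, take a sensor $k'\neq k(\ell)$. A compliant $k'$ contributes zero to $y_{\ell}$, since $\ell\notin\cS_{k'}$ and the corresponding inner product vanishes. If $k'$ instead radiates $a'\phi_{\ell}$ on a pair outside $\cS_{k'}$, its projection onto $\phi_{\ell}$ is $h_{k'}a'\neq0$. Hence any component of another sensor in $y_{\ell}$ certifies a transmission outside that sensor's entitlement.

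The main obstacle is the bookkeeping for mixed-depth allocations within one slot. I must show that the union of the allocated nodes, completed by free nodes, is an admissible pruning, and that pulses whose supports may extend beyond a slot boundary are still orthogonal across slots. I would handle this by regarding the whole frame as one depth-$J$ tree with translates indexed globally, rather than slot by slot. Orthogonality then follows purely from integer translation and the tiling, without any compact-support assumption.
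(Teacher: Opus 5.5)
Your reduction is the paper's own proof: linearity of the correlator, $\langle\phi_{\ell'},\phi_{\ell}\rangle=\delta_{\ell\ell'}$, and one owner per leaf. Your same-slot case is fine. If a single admissible pruning $\cT$, with all its translates, is used in every slot, your different-slot case is fine too and the lemma follows. That single-pruning model is what the paper tacitly assumes when it cites orthonormality of $\Leaf(\cT)$ for one correlator bank.

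The gap is in your different-slot step, in the generality you set up, and the remedy in your last paragraph does not close it. The phrase ``distinct integer translates within the same orthonormal family'' requires slots $s$ and $s'$ to share the pruning. You let the tree depend on the slot. The packing behind Definition~\ref{def:graded} and the demotion rule (keep one child of a node, free the rest) can indeed leave a node $(J-1,m)$ in use in slot $s$ while its child $(J,2m)$ is in use in slot $s'$. By \eqref{eq:basis}, $\phi_{J,2m}(t-s'T_{J})=\sum_{n}h[n]\,\phi_{J-1,m}(t-(n+2s')T_{J-1})$, so
\[
\bigl\langle\phi_{J-1,m}(\cdot-n_{0}T_{J-1}),\,\phi_{J,2m}(\cdot-s'T_{J})\bigr\rangle=h[n_{0}-2s'],\qquad n_{0}\in\{2s,2s+1\}.
\]
This is nonzero for some $s'\neq s$ as soon as the filter is longer than two taps. db8 has 16 taps, and a depth-$j$ db8 pulse spans about $15T_{j}$, i.e.\ several slots. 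The owner of the child in slot $s'$ then appears in the parent owner's correlator output, and exact isolation fails.

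Indexing translates globally over the frame cannot repair this. The $2^{q}$ slot-$s$ translates of a depth-$(J-q)$ node do not span the same space as the slot-$s$ pulses of its depth-$J$ descendants, so no single orthonormal family contains both. Your claim that orthogonality then holds ``without any compact-support assumption'' is therefore false for slot-varying trees.

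To make the proof correct, do one of two things. Either state explicitly that one pruning is used across the whole frame, in which case your argument coincides with the paper's. Or restrict to pulses supported inside one slot, such as Haar, for which disjoint supports give the cross-slot orthogonality.
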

\begin{IEEEproof}
Orthonormality gives
$\langle\phi_{\ell'},\phi_{\ell}\rangle=\delta_{\ell\ell'}$, and
Definition~\ref{def:graded} gives each leaf one owner.
\end{IEEEproof}

\begin{proposition}[Leakage under a timing offset]
\label{prop:leak}
Let $\rho_{\ell,\ell'}(\tau)=\int\phi_{\ell}(t)\phi_{\ell'}(t-\tau)dt$.
The energy leaking out of leaf $\ell'$ into all other leaves and symbol
translates,
$\cE_{\mathrm{leak}}(\ell',\tau)=\sum_{\ell\neq\ell'}|\rho_{\ell,\ell'}(\tau)|^{2}$,
satisfies
$\cE_{\mathrm{leak}}\le\tau^{2}\|\phi_{\ell'}'\|^{2}+O(\tau^{3})$. For
the lowpass leaf $\ell'=(j,0)$,
\begin{equation}
\cE_{\mathrm{leak}}(j,\tau)\le(2\pi B_{0})^{2}\,\tau^{2}\,2^{-2j}+O(\tau^{3}).
\label{eq:leak}
\end{equation}
\end{proposition}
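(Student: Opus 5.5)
Because the leaves of $\cT$ and all their translates form an orthonormal set (the admissible-pruning property used in Lemma~\ref{lem:isolation}), I will reduce the leakage to a projection bound. Write $\phi_{\ell'}^{\tau}(t)=\phi_{\ell'}(t-\tau)$. By Bessel's inequality over this set,
\begin{equation*}
\sum_{\ell}|\rho_{\ell,\ell'}(\tau)|^{2}\le\|\phi_{\ell'}^{\tau}\|^{2}=1 ,
\end{equation*}
where the sum runs over every leaf and symbol translate. The term $\ell=\ell'$ (zero translate) equals $|\rho_{\ell',\ell'}(\tau)|^{2}$, where $\rho_{\ell',\ell'}(\tau)=R(\tau)$ is the autocorrelation of the unit-energy, real pulse $\phi_{\ell'}$. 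So the plan is to bound $\cE_{\mathrm{leak}}(\ell',\tau)\le 1-R(\tau)^{2}$ and then expand $R$ in $\tau$.

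For the expansion, $R$ is even with $R(0)=1$ and $R'(0)=0$. Integrating by parts (or using Parseval, $R(\tau)=\int|\Phi_{\ell'}(f)|^{2}\cos(2\pi f\tau)\,df$) gives $R''(0)=-\|\phi_{\ell'}'\|^{2}$. Hence
\begin{equation*}
R(\tau)=1-\tfrac12\tau^{2}\|\phi_{\ell'}'\|^{2}+O(\tau^{3}),
\end{equation*}
and therefore $1-R(\tau)^{2}=\tau^{2}\|\phi_{\ell'}'\|^{2}+O(\tau^{3})$, which is the general bound. A cleaner route that avoids Bessel's slack is to write
\begin{equation*}
1-R(\tau)^2\le 2(1-R(\tau))=\|\phi_{\ell'}^{\tau}-\phi_{\ell'}\|^{2}\le\tau^{2}\|\phi_{\ell'}'\|^{2}.
\end{equation*}
The last step follows from $\phi_{\ell'}^{\tau}-\phi_{\ell'}=-\int_0^\tau\phi_{\ell'}'(t-s)\,ds$ and Minkowski's integral inequality, and it even removes the $O(\tau^3)$ term. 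I would present this version and keep the Taylor form to match the stated remainder.

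The main obstacle is regularity. The argument needs $\phi_{\ell'}\in H^{1}$, that is, finite RMS bandwidth, so that $\|\phi_{\ell'}'\|$ is finite and the Taylor remainder is controlled. For $\ell'=(j,0)$ this follows from the finite RMS bandwidth $B_{0}$ of $\phi$ assumed in the system model. I will also check that the translates of $\phi_{\ell'}$ itself are in the orthonormal set. They are, so they legitimately count as leakage and the bound covers them.

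For the lowpass leaf I use the dilation $\phi_{j,0}(t)=2^{-j/2}\phi(2^{-j}t)$ from Section~\ref{sec:model}. Its derivative is $\phi_{j,0}'(t)=2^{-3j/2}\phi'(2^{-j}t)$, so substituting $u=2^{-j}t$ gives
\begin{equation*}
\|\phi_{j,0}'\|^{2}=2^{-2j}\|\phi'\|^{2}.
\end{equation*}
Parseval then gives $\|\phi'\|^{2}=\int(2\pi f)^{2}|\Phi(f)|^{2}df=(2\pi B_{0})^{2}$, using the definition of RMS bandwidth for a unit-energy pulse. Substituting into the general bound yields~\eqref{eq:leak}. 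The factor $2^{-2j}$, a fourfold reduction per level, comes purely from the dilation.
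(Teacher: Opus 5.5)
Your argument is correct, but it reaches the general bound by a different route from the paper's.

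The paper linearises each coefficient separately, $\rho_{\ell,\ell'}(\tau)=\delta_{\ell\ell'}-\tau\langle\phi_{\ell},\phi_{\ell'}'\rangle+O(\tau^{2})$, and then applies Bessel's inequality to the derivative: $\sum_{\ell}|\langle\phi_{\ell},\phi_{\ell'}'\rangle|^{2}\le\|\phi_{\ell'}'\|^{2}$. You apply Bessel to the exact shifted pulse instead. That bounds the leakage by the autocorrelation deficit $1-R(\tau)^{2}$, and then by $\|\phi_{\ell'}(\cdot-\tau)-\phi_{\ell'}\|^{2}\le\tau^{2}\|\phi_{\ell'}'\|^{2}$. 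The lowpass step (dilation for $2^{-2j}$, Parseval for $(2\pi B_{0})^{2}$) is identical in both.

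Your version gives a non-asymptotic inequality, valid for every $\tau$ with no remainder term, assuming only $\|\phi_{\ell'}'\|<\infty$. The paper's term-by-term expansion is harder to justify. It requires controlling the remainders in square-summed form across infinitely many leaves and translates. The natural way is $\|\phi_{\ell'}(\cdot-\tau)-\phi_{\ell'}+\tau\phi_{\ell'}'\|=O(\tau^{2})$, which in effect needs a square-integrable second derivative.

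What the paper's route buys is the per-leaf leading-order profile $\rho_{\ell,\ell'}(\tau)\approx-\tau\langle\phi_{\ell},\phi_{\ell'}'\rangle$ for $\ell\neq\ell'$. The leakage pattern is the expansion of the pulse slope in the leaf basis. This underlies the slope interpretation after the proposition and the later link to the detector weights $w_{\ell}^{\star}\propto\rho_{\ell}$; your bound controls only the total.

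One caveat on your Taylor variant: $R(\tau)=1-\tfrac12\tau^{2}\|\phi_{\ell'}'\|^{2}+O(\tau^{3})$ needs more than finite RMS bandwidth. With only $\|\phi_{\ell'}'\|<\infty$ the remainder is $o(\tau^{2})$. Leading with the Minkowski version, as you plan, is therefore the right call.
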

\begin{IEEEproof}
A first-order expansion gives
$\rho_{\ell,\ell'}=\delta_{\ell\ell'}-\tau\langle\phi_{\ell},\phi_{\ell'}'\rangle+O(\tau^{2})$,
and Bessel's inequality bounds the sum by $\|\phi_{\ell'}'\|^{2}$. On the
lowpass branch $\phi_{j,0}'(t)=2^{-3j/2}\phi'(2^{-j}t)$, so
$\|\phi_{j,0}'\|^{2}=2^{-2j}\|\phi'\|^{2}$, and Parseval's theorem gives
$\|\phi'\|^{2}=(2\pi B_{0})^{2}$.
\end{IEEEproof}

\emph{Physical meaning:} A timing offset shifts the pulse, which then
overlaps its neighbours in proportion to its slope. One level deeper, a
lowpass symbol lasts twice as long, the same offset is half as large
relative to it, and the leaked energy falls fourfold. A bandpass leaf
centred at $f_{c}$ has slope energy near
$(2\pi)^{2}(f_{c}^{2}+B_{j}^{2})$, set mainly by $f_{c}$, so its
leakage stays near $(2\pi f_{c}\tau)^{2}$ at every depth. Deeper
allocations thus give finer revocation everywhere and, on the lowpass
branch, higher timing tolerance too.

\subsection{Threat model}

A fraction $f_{B}$ of sensors is compromised. A compromised sensor may report false decisions, transmit on leaves outside its entitlement, or collude to manipulate reputation reports. The DFC is the root of trust and broadcasts the allocation table signed with its key. The analysis uses
three assumptions. (T1)~A compromised sensor radiates within a time-frequency
region proportional to its current allocation, such as when a secure element
sets the transmit mask from the signed table. (T2)~The DFC attributes a
residual to its source from per-node timing and channel estimates.
(T3)~An out-of-subtree transmission carries the residual timing offset
of its link; an attacker that aligns its clock with the DFC removes the
leakage evidence, and the direct residual on the violated leaf then
becomes the evidence. This case is left for future work. Wideband jamming and
DFC compromise lie beyond the scope of this paper.

\section{Reputation Dynamics and Confinement}
\label{sec:rep}

The reputation update follows~\cite{ghosh2026macro}. Sensor $k$ rates
each neighbour $k'\in\cN_{k}$ by exponential smoothing, at rate
$\eta\in(0,1)$, of the consistency $c_{k\to k'}(t)\in[0,1]$ between the
neighbour's report and its own observation:
$\nu_{k\to k'}(t+1)=(1-\eta)\nu_{k\to k'}(t)+\eta c_{k\to k'}(t)$. The
DFC combines the ratings of node $k$ through a trimmed mean
$\TM_{\beta}$, which drops the fraction $\beta$ of highest and lowest
values, and mixes in the physical-layer evidence
$\bar\Lambda_{k}\in[0,1]$ of Section~\ref{sec:detect} with weight
$\omega$:
\begin{align}
U_{k}(t)&=(1-\omega)\TM_{\beta}\big(\{\nu_{k'\to k}(t)\}_{k'\in\cN_{k}}\big)
+\omega\big(1\!-\!\bar\Lambda_{k}(t)\big),
\label{eq:U}\\
\xi_{k}(t+1)&=(1-\eta)\xi_{k}(t)+\eta U_{k}(t).
\label{eq:xi}
\end{align}
The map~\eqref{eq:xi} is a contraction of modulus $1-\eta$, so $\xi_{k}$
tracks $U_{k}$ with a memory of about $1/\eta$ intervals, and
Definition~\ref{def:graded} is applied to $\xi_{k}(t+1)$. On demotion
from tier $q$ to $q-1$, the DFC keeps one child of each of $2^{q-1}$
nodes of $\cS_{k}$, so $A$ falls fourfold, and returns the rest to a
free pool.

\begin{theorem}[Bias under Byzantine reports]
\label{thm:bias}
Let $N=|\cN_{k}|>2\lceil\beta N\rceil$; trim $\lceil\beta N\rceil$
values from each tail; let a fraction $f_{B}<\beta$ of the neighbors
report arbitrary values in $[0,1]$; and let honest reports lie in
$[\bar u-\delta,\bar u+\delta]$. Then
$|\TM_{\beta}-\bar u|\le\delta$, and the fixed point of~\eqref{eq:xi}
satisfies
$|\xi_{k}^{\star}-(1-\omega)\bar u-\omega(1-\bar\Lambda_{k})|\le(1-\omega)\delta$.
\end{theorem}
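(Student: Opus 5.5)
The proof has two parts: a sandwich argument for the trimmed mean, and a fixed-point computation for the contraction~\eqref{eq:xi}.

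\emph{Trimmed mean.} Let $t=\lceil\beta N\rceil$. The number of Byzantine neighbours is $f_{B}N<\beta N\le t$, so at most $t$ reports are arbitrary.

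Sort the $N$ reports and keep the middle $N-2t\ge 1$ of them, which is possible because $N>2t$. Take any retained value $v$; I claim $v\ge\bar u-\delta$.
\begin{itemize}
\item Suppose instead $v<\bar u-\delta$. Then $v$ and every value below it lie under $\bar u-\delta$, so none of them is honest.
\item The values at or below $v$ in sorted order are the $t$ trimmed low values plus at least one retained value. That is at least $t+1$ Byzantine reports.
\item This contradicts the bound of at most $t$ Byzantine reports.
\end{itemize}
The symmetric argument gives $v\le\bar u+\delta$. Every retained value therefore lies in $[\bar u-\delta,\bar u+\delta]$, and so does their average. Hence $|\TM_{\beta}-\bar u|\le\delta$.

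\emph{Fixed point.} Treat $U_{k}$ as stationary, with the ratings and $\bar\Lambda_{k}$ frozen. The map $\xi\mapsto(1-\eta)\xi+\eta U_{k}$ is a contraction with modulus $1-\eta<1$, so by Banach's theorem it has a unique fixed point, $\xi_{k}^{\star}=U_{k}$.

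Substituting~\eqref{eq:U} gives
\[
\xi_{k}^{\star}-(1-\omega)\bar u-\omega(1-\bar\Lambda_{k})=(1-\omega)\big(\TM_{\beta}-\bar u\big).
\]
The first part bounds its absolute value by $(1-\omega)\delta$, which is the claim.

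\emph{Main obstacle.} The delicate step is the order-statistic counting. It needs the strict inequality $f_{B}N<t+1$, which holds since $f_{B}<\beta$ gives $f_{B}N<\beta N\le t$. It also needs at least one retained value, which $N>2t$ guarantees.

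A second point is that $\xi_{k}^{\star}$ is only well defined when $U_{k}$ does not change over time. I would state explicitly that the fixed point is taken with $U_{k}$ frozen. If instead $U_{k}$ stays within the same band while varying, the contraction still pulls $\xi_{k}$ into that band, up to a geometrically decaying term, so the same bound holds asymptotically.
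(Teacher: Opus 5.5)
Your proof is correct and follows the paper's argument: each retained value has $\lceil\beta N\rceil$ trimmed values on each side, there are too few Byzantine reports to fill a tail, so an honest value bounds it on each side, and the second claim follows by substituting the fixed point $\xi_{k}^{\star}=U_{k}$ of~\eqref{eq:xi}. Two small differences: your contrapositive count includes the retained value itself, so it needs only at most $\lceil\beta N\rceil$ Byzantine reports rather than fewer, and you write out the fixed-point step (with $U_{k}$ frozen), which the paper leaves implicit.
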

\begin{IEEEproof}
Each retained value has $\lceil\beta N\rceil$ trimmed values below it
and $\lceil\beta N\rceil$ above it. Fewer than $\beta N$ reports are
adversarial, so an honest value lies on each side, and every retained
value lies in $[\bar u-\delta,\bar u+\delta]$.
\end{IEEEproof}

\emph{Physical meaning:} To shift an average, liars must report extreme
values, and trimming discards exactly those. A minority below $\beta$
therefore moves a node's reputation by at most the natural spread among
honest neighbors, for any number of neighbors.

To bound the confinement time, the DFC latches: once node $k$ is
flagged, it holds $\bar\Lambda_{k}=1$ until $\xi_{k}<\theta_{q-1}$, two
tiers below the tier at detection. This makes the descent deterministic,
but a false alarm holds an honest node near
$(1-\omega)\mathrm{TM}_\beta$, one tier down at $\omega=0.5$.

\begin{theorem}[Confinement time]
\label{thm:confine}
Let node $k$ be compromised at $t=0$ in tier $q\ge2$ with
$\xi_{k}(0)=\xi_{0}$; let each interval give a detection independently
with probability $p$; and let $U_{k}(t)\le u_{1}<\theta_{q-1}$ after the
first detection. Then the confinement time satisfies
\begin{equation}
\bbE[T_{\mathrm{conf}}]\le\frac{1}{p}+
\left\lceil\frac{\ln\big((\theta_{q-1}-u_{1})/(\xi_{0}-u_{1})\big)}
{\ln(1-\eta)}\right\rceil .
\label{eq:confine}
\end{equation}
\end{theorem}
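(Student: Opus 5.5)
We split the confinement time into two phases, $T_{\mathrm{conf}}=T_{\mathrm{det}}+T_{\mathrm{desc}}$. Here $T_{\mathrm{det}}$ is the index of the first interval that produces a detection. $T_{\mathrm{desc}}$ is the number of further intervals needed for $\xi_{k}$ to fall below $\theta_{q-1}$, the threshold at which the latch is released and the node sits two tiers below its tier at detection. By linearity of expectation we bound $\bbE[T_{\mathrm{det}}]$ and $T_{\mathrm{desc}}$ separately. The second is deterministic once the latch is engaged.

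\emph{Detection phase.} Detections occur independently in each interval with probability $p$, so $T_{\mathrm{det}}$ is geometric on $\{1,2,\dots\}$ and $\bbE[T_{\mathrm{det}}]=1/p$. We need one point about monotonicity. Before the first detection, the reputation may not have decreased from $\xi_{0}$. The only property we use about that phase is that $\xi_{k}$ at the moment of detection is at most $\xi_{0}$, or at most the value the bound is stated for. We will therefore read $\xi_{0}$ as an upper bound on $\xi_{k}$ at the first detection. If $U_{k}$ could exceed $\xi_{0}$ before detection, we would replace $\xi_{0}$ by $\max(\xi_{0},\sup U_{k})\le 1$. This does not break the form of the bound, but it is the subtle point in the argument.

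\emph{Descent phase.} After the first detection the latch holds $\bar\Lambda_{k}=1$, and by hypothesis $U_{k}(t)\le u_{1}$. Subtract $u_{1}$ from both sides of \eqref{eq:xi}. This gives $\xi_{k}(t+1)-u_{1}=(1-\eta)(\xi_{k}(t)-u_{1})+\eta(U_{k}(t)-u_{1})\le(1-\eta)(\xi_{k}(t)-u_{1})$. Iterating $n$ times, we get $\xi_{k}-u_{1}\le(1-\eta)^{n}(\xi_{0}-u_{1})$, using the contraction of modulus $1-\eta$ noted after \eqref{eq:xi}. Requiring the right side to be below $\theta_{q-1}-u_{1}>0$ gives the condition $n\ln(1-\eta)\le\ln\big((\theta_{q-1}-u_{1})/(\xi_{0}-u_{1})\big)$. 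Since $\ln(1-\eta)<0$, dividing flips the inequality, and the smallest admissible integer $n$ is the ceiling term in \eqref{eq:confine}. If $\xi_{0}\le\theta_{q-1}$ the logarithm is nonpositive and the ceiling is at most zero, so the bound still holds.

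Adding the two phases yields \eqref{eq:confine}. The condition $q\ge2$ guarantees that tier $q-1$ exists with $\theta_{q-1}>0$, so the threshold and the demotion to tier $q-2$ are well defined. We expect the main obstacle to be the monotonicity point in the detection phase, namely justifying that the reputation at detection does not exceed $\xi_{0}$. We would handle it by the replacement described above, or by noting that a compromised node's $U_{k}$ does not exceed its prior reputation.
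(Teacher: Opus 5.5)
Your proof is correct and follows the paper's own argument: a geometric first-detection time with mean $1/p$, then the induction $\xi_{k}-u_{1}\le(1-\eta)^{n}(\xi_{0}-u_{1})$ and solving for the crossing of $\theta_{q-1}$. Your side remark on the case $\xi_{0}\le\theta_{q-1}$ is moot, because a tier-$q$ node has $\xi_{0}\ge\theta_{q}>\theta_{q-1}$, and in that case it is the quotient, not the logarithm, that is nonpositive. The point you flag, that $\xi_{k}$ at first detection must not exceed $\xi_{0}$, is left implicit in the paper's proof and is covered by its remark that replacing $\xi_{0}$ by $1$ handles any pre-detection history, which is exactly your $\max(\xi_{0},\sup U_{k})\le 1$ fix.
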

\begin{IEEEproof}
The first detection time is geometric with mean $1/p$. After it,
$U_{k}\le u_{1}$; since~\eqref{eq:xi} increases in both $\xi_{k}$ and
$U_{k}$, induction gives
$\xi_{k}(t)-u_{1}\le(1-\eta)^{t}(\xi_{0}-u_{1})$, and solving
$\xi_{k}(t)\le\theta_{q-1}$ gives the second term.
\end{IEEEproof}

Replacing $\xi_{0}$ by $1$ gives a bound valid for any reputation history
before detection. Under the latch,~\eqref{eq:U} gives
$U_{k}=(1-\omega)\TM_{\beta}$, so $u_{1}$ and the bound depend on
$\omega$. 
\\ 
\emph{Physical meaning:} Confinement is a wait for the first
detection plus a fixed descent set by the reputation memory $1/\eta$. A
node compromised in tier $q$ keeps at most
$4^{q}(\bbE[T_{\mathrm{conf}}]+1)$ entitled cells.

\section{Detection of Out-of-Subtree Transmission}
\label{sec:detect}

\subsection{Test statistic and optimal weights}

Let node $k$ transmit $\phi_{\ell_{k}}$, $\ell_{k}\notin\cS_{k}$, with
gain $c=h_{k}a_{k,\ell_{k}}$; after phase compensation $c$ is real with
arbitrary sign. By Proposition~\ref{prop:leak}, it leaks $c\rho_{\ell}$
into each leaf $\ell$, with $\rho_{\ell}=\rho_{\ell,\ell_{k}}(\tau_{k})$
known from the allocation and the offset estimate. The DFC subtracts the
expected entitled signal, including the leakage that
Proposition~\ref{prop:leak} predicts for every entitled transmission, $z_{\ell}=y_{\ell}-\sum_{\ell'}\hat h_{k(\ell')}\hat a_{k(\ell'),\ell'}\rho_{\ell,\ell'}(\hat\tau_{k(\ell')})$, and tests, with ideal cancellation,
\begin{equation}
\cH_{0}:z_{\ell}=n_{\ell},\qquad
\cH_{1}:z_{\ell}=c\rho_{\ell}+n_{\ell},\qquad \ell\in\cN(\ell_{k}),
\label{eq:hyp}
\end{equation}
where $\cN(\ell_{k})$ holds the leaves $\ell\neq\ell_{k}$ with
significant $\rho_{\ell}$. The violated leaf $\ell_{k}$ also carries its
owner's symbol, and decision-directed cancellation there absorbs part of
the violation, so the detector draws its evidence from the clean
leakage leaves; Section~\ref{sec:results} evaluates (8) with known channel gains and decision-directed
estimates of the entitled symbols, with $P_{FA}$ set per test. Class-A noise is a heavy-tailed mixture, so the likelihood-ratio test
takes a different form from the energy detector. Expanding the
log-likelihood around $c=0$ gives the locally optimum structure: pass
each residual through the score $g(z)=-\frac{d}{dz}\ln p(z)$, then
combine,
\begin{equation}
\Lambda_{k}=\Big(\textstyle\sum_{\ell}w_{\ell}\,g(z_{\ell})\Big)^{2},
\qquad I_{\ell}=\bbE\big[g(z_{\ell})^{2}\,\big|\,\cH_{0}\big],
\label{eq:score}
\end{equation}
and compare $\Lambda_{k}$ with a threshold set for a target false-alarm
rate $P_{FA}$; its normalised value $\bar\Lambda_{k}$ feeds~\eqref{eq:U}.
For Class-A noise, $g$ is a soft limiter~\cite{spaulding1977optimum}: it
grows linearly for small $|z|$ and decays for large $|z|$.

\begin{proposition}[Deflection-optimal weights]
\label{prop:weights}
Let $S=\sum_{\ell}w_{\ell}g(z_{\ell})$ and
$d^{2}(w)=(\bbE[S|\cH_{1}]-\bbE[S|\cH_{0}])^{2}/\Var(S|\cH_{0})$. For
small $|c|$, $d^{2}$ is maximised by $w_{\ell}^{\star}\propto\rho_{\ell}$,
with
\begin{equation}
d^{2}_{\max}=c^{2}\textstyle\sum_{\ell\in\cN(\ell_{k})}\rho_{\ell}^{2}I_{\ell}.
\label{eq:dmax}
\end{equation}
\end{proposition}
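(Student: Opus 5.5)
Under $\cH_{0}$ each residual is $z_{\ell}=n_{\ell}$. The leaf noises are independent with the symmetric Class-A law~\eqref{eq:classa}. The score $g=-p'/p$ is therefore odd, so $\bbE[g(n_{\ell})]=-\int p'(z)\,dz=0$, assuming $p$ vanishes at $\pm\infty$. Independence then gives
\[
\bbE[S|\cH_{0}]=0,\qquad \Var(S|\cH_{0})=\textstyle\sum_{\ell}w_{\ell}^{2}I_{\ell}.
\]
This is the only place the independence modelling assumption of Section~\ref{sec:model} is used.

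Under $\cH_{1}$ we have $z_{\ell}=c\rho_{\ell}+n_{\ell}$. We expand $\bbE[g(n_{\ell}+c\rho_{\ell})]$ to first order in $c$. One route is to differentiate under the integral, $\frac{d}{dx}\bbE[g(n+x)]\big|_{x=0}=\bbE[g'(n)]$. An integration by parts, $\int g'(z)p(z)\,dz=-\int g(z)p'(z)\,dz=\int g(z)^{2}p(z)\,dz$, then identifies this with the Fisher information $I_{\ell}$. The other route is to write $\bbE[g(n+x)]=\int g(z)p(z-x)\,dz$ and expand $p(z-x)=p(z)-xp'(z)+O(x^{2})$; this gives the same result and only needs smoothness of $p$, not of $g$. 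Either way,
\[
\bbE[S|\cH_{1}]-\bbE[S|\cH_{0}]=c\textstyle\sum_{\ell}w_{\ell}\rho_{\ell}I_{\ell}+O(c^{2}).
\]
The main technical point is justifying this step for the Class-A density. It holds because $p$ is a finite Poisson-weighted Gaussian mixture, which is smooth and has Gaussian tails, and because $g$ is bounded in growth (a soft limiter). Dominated convergence therefore applies, and the $O(c^{2})$ remainder is uniform over the finitely many leaves in $\cN(\ell_{k})$.

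Putting the two moments together,
\[
d^{2}(w)=c^{2}\frac{\big(\sum_{\ell}w_{\ell}\rho_{\ell}I_{\ell}\big)^{2}}{\sum_{\ell}w_{\ell}^{2}I_{\ell}}+O(c^{3}).
\]
Set $v_{\ell}=w_{\ell}\sqrt{I_{\ell}}$ and $u_{\ell}=\rho_{\ell}\sqrt{I_{\ell}}$. The leading term is then $c^{2}\langle v,u\rangle^{2}/\|v\|^{2}$. By Cauchy--Schwarz this is at most $c^{2}\|u\|^{2}=c^{2}\sum_{\ell}\rho_{\ell}^{2}I_{\ell}$, with equality exactly when $v\propto u$, that is, $w_{\ell}\propto\rho_{\ell}$. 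This gives~\eqref{eq:dmax}. If the leaves share the same noise law, the $I_{\ell}$ are equal, and the result reduces to matched filtering on the leakage profile $\rho_{\ell}$ predicted by Proposition~\ref{prop:leak}. Finally, the unknown sign of $c$ does not affect the argmax, since $d^{2}$ depends only on $c^{2}$. This is consistent with squaring the statistic in~\eqref{eq:score}.
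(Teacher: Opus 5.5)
Your proposal is correct and follows the paper's proof: a zero-mean score with $\Var(S|\cH_{0})=\sum_{\ell}w_{\ell}^{2}I_{\ell}$, the first-order shift $c\rho_{\ell}I_{\ell}$ via the Fisher-information identity, and Cauchy--Schwarz in the $I_{\ell}$-weighted inner product. You add the integration-by-parts and dominated-convergence justification that the paper omits; the growth bound this needs is $|g(z)|\le|z|/\sigma_{0}^{2}$, because $g(z)$ is $z$ times a weighted average of the $\sigma_{i}^{-2}$, and the bound holds even though for finite $Z$ the score grows linearly in the far tails rather than being a bounded soft limiter.
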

\begin{IEEEproof}
Under $\cH_{0}$ the score has zero mean and
$\Var(S)=\sum_{\ell}w_{\ell}^{2}I_{\ell}$. For small $c$,
$\bbE[g(z_{\ell})|\cH_{1}]=c\rho_{\ell}I_{\ell}+o(c)$, so
$d^{2}=c^{2}(\sum w_{\ell}\rho_{\ell}I_{\ell})^{2}/\sum w_{\ell}^{2}I_{\ell}$,
and the Cauchy-Schwarz inequality gives $w^{\star}\propto\rho$.
\end{IEEEproof}

The isolation analysis and the detector thus describe one physical
quantity. With finite third moments of
$g(z_{\ell})$ and a vanishing share of $\sum w_{\ell}^{2}I_{\ell}$ in
any single leaf, the Lyapunov condition holds and $S$ is asymptotically
Gaussian; with $t=Q^{-1}(P_{FA}/2)$, $P_{D}=Q(t-d)+Q(t+d)$. For
log-normal gains we evaluate this law with the exact mean and variance
of $g(c\rho_{\ell}+n_{\ell})$.

\subsection{Optimal tree depth}

A deeper tree enlarges the combining set, while the energy per leaf
falls; the correlator bank, the allocation table and the pool management
grow with the number of leaves. Write $M(J)=|\cN(\ell_{k})|$, $\gamma(J)$
for the per-leaf signal-to-noise ratio and $\psi(\gamma)$ for the
per-leaf deflection, so $d(J)=\sqrt{M(J)}\,\psi(\gamma(J))$, and let
$C(J)$ be the per-interval overhead at price $\mu>0$. The design
objective is $F(J)=\ln d(J)-\mu C(J)$.

\begin{assumption}
\label{asm:depth}
(A1) $\ln M(J)$ is concave and increasing; (A2) $\ln\psi(\gamma(J))$ is
concave in $J$; (A3) $C(J)$ is increasing and strictly convex on
$[J_{\min},J_{\max}]$, with $J_{\min}=L$.
\end{assumption}

\begin{theorem}[Unique optimal depth]
\label{thm:depth}
Under Assumption~\ref{asm:depth}, $F(J)$ is strictly concave on
$[J_{\min},J_{\max}]$ with a unique maximiser $J^{\star}$, which solves $F'(J^{\star})=0$ when interior and sits at an endpoint in the remaining cases. For integer
depth, the optimum is the better of $\lfloor J^{\star}\rfloor$ and
$\lceil J^{\star}\rceil$.
\end{theorem}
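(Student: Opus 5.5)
The plan is to decompose $F$ into its three pieces and read off concavity term by term. Since $d(J)=\sqrt{M(J)}\,\psi(\gamma(J))$,
\[
F(J)=\tfrac12\ln M(J)+\ln\psi(\gamma(J))-\mu C(J).
\]
By (A1) the first term is concave. By (A2) the second is concave. By (A3), $C$ is strictly convex and $\mu>0$, so $-\mu C$ is strictly concave. A sum of concave functions with one strictly concave summand is strictly concave on the interval $[J_{\min},J_{\max}]$. I would verify this directly from the chord inequality rather than through second derivatives, so that no twice-differentiability is assumed: for $J_1\neq J_2$ and $\lambda\in(0,1)$ the inequality is weak for the first two terms and strict for the third.

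Next comes existence and uniqueness. $F$ is concave on a compact interval, hence continuous on its interior. At the endpoints I would either assume the constituent functions are continuous there or note that $M,\psi,C$ are defined from smooth models. Continuity on a compact set gives a maximiser, and strict concavity makes it unique: two distinct maximisers would force the midpoint strictly above the maximum.

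For the characterisation I would use the first-order conditions for concave functions on an interval, assuming $F$ is differentiable (as the statement's reference to $F'$ presupposes). If $J^\star$ is interior, then $F'(J^\star)=0$. Conversely, any stationary point of a concave function is a global maximiser, so it equals $J^\star$. If $F'$ has no zero on the interval, then $F'$ is strictly decreasing and keeps one sign. A negative sign places $J^\star=J_{\min}=L$, and a positive sign places $J^\star=J_{\max}$.

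Finally, for integer depth I would use that a strictly concave function is strictly increasing on $[J_{\min},J^\star]$ and strictly decreasing on $[J^\star,J_{\max}]$. Among integers, every $J\le\lfloor J^\star\rfloor$ then gives $F(J)\le F(\lfloor J^\star\rfloor)$, and every $J\ge\lceil J^\star\rceil$ gives $F(J)\le F(\lceil J^\star\rceil)$. So the integer optimum is the better of the two. Both lie in the interval because the endpoints are integers ($J_{\min}=L$, and $J_{\max}$ is a depth).

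The main obstacle is less a hard estimate than bookkeeping. I would state explicitly that $F$ is extended to real $J$ through the continuous models of $M$, $\gamma$ and $C$, since depth is intrinsically an integer. I would also note the differentiability needed for the condition $F'(J^\star)=0$; otherwise the condition should be replaced by $0\in\partial F(J^\star)$.
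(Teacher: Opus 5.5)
Your proof is correct and takes the paper's route: write $F=\tfrac12\ln M+\ln\psi(\gamma)-\mu C$, get concavity of the first two terms from (A1)--(A2) and strict concavity of $-\mu C$ from (A3), then conclude uniqueness from strict concavity on a compact interval. Your additions fill gaps in the paper's one-line proof without changing the argument: continuity at the endpoints, which the compactness step tacitly needs for the maximum to be attained; the first-order characterisation; and the unimodality argument for the $\lfloor J^\star\rfloor$/$\lceil J^\star\rceil$ claim, which the paper states but does not prove.
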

\begin{IEEEproof}
$\ln d=\frac12\ln M+\ln\psi(\gamma)$ is concave by (A1)-(A2), $-\mu C$
is strictly concave by (A3), and a strictly concave function on a
compact interval has exactly one maximiser.
\end{IEEEproof}

\begin{corollary}[Geometric regime]
\label{cor:geometric}
If $M=M_{0}2^{J}$, $\gamma=\gamma_{0}2^{-J}$, $\psi=\kappa\gamma^{a}$ and
$C=C_{0}2^{J}$, then $\ln d$ has slope $(\frac12-a)\ln2$. For
$a<\frac12$, $J^{\star}=\log_{2}[(\frac12-a)/(\mu C_{0})]$ clipped to
$[J_{\min},J_{\max}]$; for $a\ge\frac12$, $J^{\star}=J_{\min}$.
\end{corollary}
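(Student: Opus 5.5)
The plan is to substitute the geometric model into $F(J)=\ln d(J)-\mu C(J)$, use Theorem~\ref{thm:depth} for uniqueness, and then solve the first-order condition in closed form. First I will compute $\ln d$. Since $d=\sqrt{M}\,\psi(\gamma)$, we get
\[
\ln d(J)=\tfrac12\ln M_{0}+\ln\kappa+a\ln\gamma_{0}+\big(\tfrac12-a\big)J\ln2 .
\]
This is affine in $J$ with slope $(\frac12-a)\ln2$, which is the first claim. Next I will check Assumption~\ref{asm:depth}. (A1) holds because $\ln M$ is affine and increasing. (A2) holds because $\ln\psi(\gamma(J))$ is affine, hence concave. (A3) holds because $C_{0}2^{J}$ is increasing and strictly convex when $C_{0}>0$. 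Theorem~\ref{thm:depth} then gives a unique maximiser on $[J_{\min},J_{\max}]$ for every value of $a$.

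Differentiating,
\[
F'(J)=\big(\tfrac12-a\big)\ln2-\mu C_{0}2^{J}\ln2 .
\]
The common factor $\ln2$ cancels in the stationarity condition. This is the point to watch, because it is why no stray $\ln2$ appears in the stated formula. The condition therefore reads $2^{J}=(\frac12-a)/(\mu C_{0})$, which gives $J^{\star}=\log_{2}[(\frac12-a)/(\mu C_{0})]$ whenever $a<\frac12$. The right-hand side is well defined because the argument of $\log_2$ is positive.

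To handle the clipping I will use that $F'$ is strictly decreasing. If the unconstrained root lies below $J_{\min}$, then $F'<0$ on the whole interval and the maximiser is $J_{\min}$. If the root lies above $J_{\max}$, then $F'>0$ on the interval and the maximiser is $J_{\max}$. In both cases the maximiser is the root clipped to $[J_{\min},J_{\max}]$.

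Finally, for $a\ge\frac12$ the first term of $F'$ is $\le0$ and the second is strictly negative, so $F'<0$ everywhere. $F$ is then strictly decreasing and $J^{\star}=J_{\min}$. The integer version follows directly from the last sentence of Theorem~\ref{thm:depth}. I expect no real obstacle. The only care needed is to state the positivity assumptions $\mu,C_{0},\kappa,\gamma_{0},M_{0}>0$ and to keep track of the $\ln2$ factor.
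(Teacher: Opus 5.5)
Your proof is correct and follows the argument the paper leaves implicit (the corollary has no separate proof and is a direct specialisation of Theorem~\ref{thm:depth}). Substituting the geometric model makes $\ln d$ affine with slope $(\frac12-a)\ln2$. Because $F'(J)=(\frac12-a)\ln2-\mu C_{0}2^{J}\ln2$ is strictly decreasing, its root clipped to $[J_{\min},J_{\max}]$ is the maximiser for $a<\frac12$, and $F'<0$ everywhere gives $J^{\star}=J_{\min}$ for $a\ge\frac12$.
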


\emph{Physical meaning:} Splitting the tree spreads the evidence over
more leaves. When the evidence per leaf falls faster than the number of
leaves grows ($a\ge\frac12$), the DFC sees violations best in a shallow
tree, and the overhead term strengthens that preference. More generally,
whenever $d(J)$ decreases and $C(J)$ increases, $F(J)$ decreases and
$J^{\star}=J_{\min}$, with or without (A2).

\section{Simulation Setup}
\label{sec:sim}

\begin{table}[!t]
\centering
\caption{Simulation parameters}
\label{tab:params}
\footnotesize
\setlength{\tabcolsep}{3pt}
\begin{tabular}{lll}
\hline
Parameter & Symbol & Value\\
\hline
Sensors, groups & $K,G$ & 64, 4\\
Tree depth, tiers & $J,L$ & 4 (swept 3-7), 3\\
Tier thresholds & $\theta_{1},\theta_{2}$ & 1/3, 2/3\\
Frame length & $T_{F}$ & $1024\,T_{0}$ ($S=64$ at $J=4$)\\
Compromised fraction & $f_{B}$ & 0-0.30, default 0.15\\
Trim, smoothing, evidence wt. & $\beta,\eta,\omega$ & 0.30, 0.20, 0.5\\
Timing offset & $\tau/T_{0}$ & 0.1 (swept to 0.2)\\
Class-A parameters & $A,\Gamma,Z$ & 0.1, 0.1, 4\\
Shadowing, false-alarm rate & $\sigma_{s},P_{FA}$ & 8\,dB, 0.01\\
Local accuracy, re-attest.\ period & $p_{1},T_{a}$ & 0.85, 20\\
Scaling function & $\phi$ & db8 (db4, db16, sym8, coif4)\\
\hline
\end{tabular}
\end{table}

Table~\ref{tab:params} lists the configuration; $\beta$, $\eta$ and
$f_{B}$ follow~\cite{ghosh2026macro}. Each group fuses the decoded
reports of its active members by majority vote. A compromised node
reports the wrong hypothesis and, in each interval, transmits on a
random leaf of another node; under (T1) the transmission hits with
probability $A(q)/A(q_{0})=4^{q-q_{0}}$ after demotion from its initial
tier $q_{0}$, and with probability one under fixed allocation. The
fusion model depends on the allocation through this ratio alone, so its
results hold for every feasible frame. Detector
outputs follow the $\cH_{0}$ and $\cH_{1}$ laws of
Section~\ref{sec:detect} at $\tau=0.1T_{0}$. Each point averages 40 independent runs of 150 fusion intervals across four groups, yielding 24,000 group decisions.

\emph{State-of-the-art baselines.} Each baseline represents one class of
the state of the art reviewed in Section~\ref{sec:intro}. All schemes use the same frame and the same cell
budget per tier, so they operate at equal spectral efficiency: (i)~WPDM with
cryptographic re-attestation every $T_{a}$ intervals, the upper-layer
access control of~\cite{wan2025continuous,ellouze2025lightweight}, which
revokes the credentials of compromised nodes and keeps their
allocation; (ii)~fixed-allocation WPDM~\cite{dey2022wpdm} with
reputation filtering of reports; (iii)~OFDM subcarriers with the same tiers and parameters,
which isolate the value of the tree hierarchy; its cyclic prefix,
longer than $\tau$, removes inter-carrier leakage, so its evidence term
carries false alarms alone, and, as in WPDM, the direct residual is left
out;
and (iv)~reputation-based fusion
alone~\cite{ghosh2026macro,su2016byzantine}. The detector baselines are the energy detector and the locally optimum detector with equal weights, \(w_\ell=1\) for all \(\ell\).

\section{Numerical Results}
\label{sec:results}

\begin{figure}[!t]
\centering
\includegraphics[width=\columnwidth]{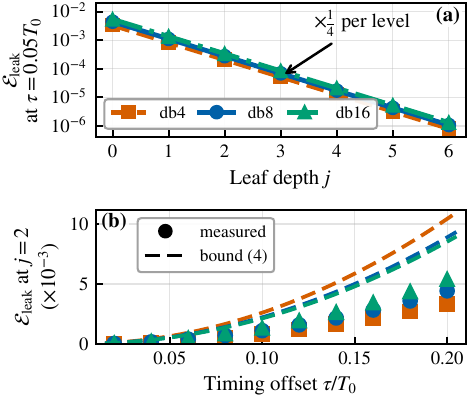}
\caption{Energy leaking out of the lowpass leaf under a timing offset.
(a)~Against leaf depth $j$ at $\tau=0.05T_{0}$; sym8 and coif4 coincide
with db8. (b)~Against $\tau$ at $j=2$; symbols are simulation, dashed
lines the bound~\eqref{eq:leak} for each family.}
\label{fig:leakage}
\end{figure}

\subsubsection{Leakage under a timing offset}

Fig.~\ref{fig:leakage}(a) shows that the lowpass-leaf leakage falls by a
factor of 4.00 per level (log slope $-1.386$), the $2^{-2j}$ law
of~\eqref{eq:leak}: the same clock error is a smaller fraction of a
longer symbol. Fig.~\ref{fig:leakage}(b) confirms growth with $\tau^{2}$
(exponent 2.000), with the bound above every measurement by a factor of
1.6 (db16) to 3.1 (db4); $(2\pi B_{0})^{2}$ ranges from 3.41 (db16) to
4.17 (db4) and is 3.554 for both db8 and sym8. The markers count only
leakage into the same subband; counting every leaf of the db8 tree, the
leakage matches~\eqref{eq:leak} within $1\%$, so the gap is the energy
that crosses into other subbands, largest for the short db4 filter.

\begin{figure}[!t]
\centering
\includegraphics[width=\columnwidth]{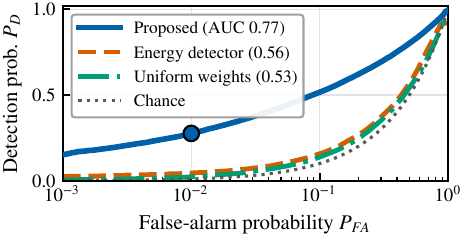}
\caption{ROC of out-of-subtree detection in Class-A noise ($J=4$,
$\tau=0.1T_{0}$). The marker shows the operating point $P_{FA}=0.01$,
$P_{D}=0.276$.}
\label{fig:roc}
\end{figure}

\begin{figure}[!t]
\centering
\includegraphics[width=\columnwidth]{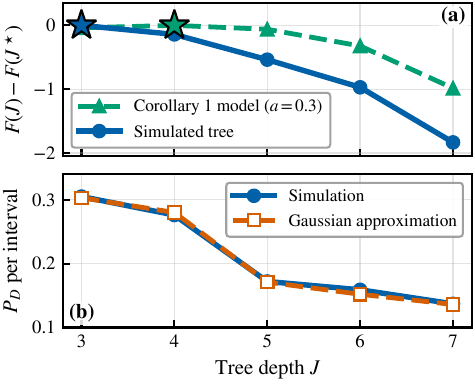}
\caption{Choice of tree depth. (a)~Objective $F(J)$ relative to its
maximum; for the simulated tree, $d(J)$ is the effective deflection
behind panel~(b). Stars mark the maximisers. (b)~Detection probability
per interval at $P_{FA}=0.01$ and $\tau=0.1T_{0}$.}
\label{fig:depth}
\end{figure}

\subsubsection{Detection and tree depth}

Fig.~\ref{fig:roc} compares the detectors. The proposed detector
reaches an area under the ROC curve (AUC) of 0.77, against 0.56 for the
energy detector and 0.53 for uniform weights. At $P_{FA}=0.01$, it
detects a violation with probability 0.276 per interval, and the
Gaussian approximation predicts 0.28. Class-A bursts are rare and strong. The energy
detector counts each burst as evidence and must raise its threshold
above the burst level, which also hides the weak leakage. The soft
limiter clips the bursts, and the weights $w^{\star}\propto\rho$ act as a
matched filter for the leakage pattern. Uniform weights add leaves where
the leakage changes sign, and so the contributions cancel.

Fig.~\ref{fig:depth} examines Theorem~\ref{thm:depth} over the
admissible depths $J\ge L=3$, with $C(J)=C_{0}2^{J}$ and
$\mu C_{0}=0.0125$ for both curves. In the geometric regime of
Corollary~\ref{cor:geometric} with $a=0.3$, the closed-form optimum
$J^{\star}=4$ matches the numerical maximiser; the overhead
constant of this curve is chosen to give $J^{\star}=4$; thus, the curve
illustrates the corollary. For the simulated tree, $d(J)$ is the effective
deflection that yields the simulated $P_{D}$ of Fig.~\ref{fig:depth}(b)
under the Gaussian law of Section~\ref{sec:detect}-A: it falls at every
step, from 2.07 at $J=3$ to 1.48 at $J=7$, therefore, $F(J)$ decreases and its
maximiser is the shallowest admissible depth, $J^{\star}=J_{\min}=3$.
$P_{D}$ itself falls from 0.305 at $J=3$ to 0.136 at $J=7$, and the
Gaussian approximation fits it at every depth. We use $J=4$: relative to $J=3$, it halves the bandwidth
of a tier-0 subband, the finest revocation step, for a loss of 0.029
in $P_{D}$. A deeper tree cuts the violator's leakage into
more, weaker pieces, and each piece sinks into the noise faster than the
number of pieces grows. Detection favours shallow trees, fine revocation
favours deep trees, and $J=4$ balances the two.

\begin{figure}[!t]
\centering
\includegraphics[width=\columnwidth]{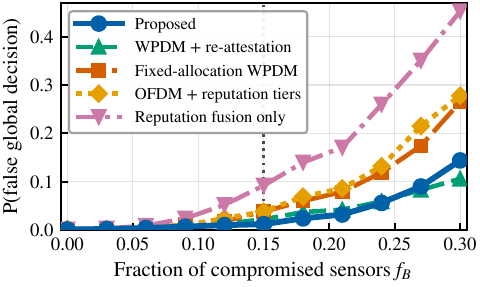}
\caption{Probability of a false global decision against the fraction of
compromised sensors for the proposed scheme and the four
state-of-the-art baselines at equal spectral efficiency. Dotted line:
default $f_{B}=0.15$.}
\label{fig:sota}
\end{figure}

\subsubsection{Comparison with the state of the art}

Fig.~\ref{fig:sota} compares the proposed scheme with the baselines of
Section~\ref{sec:sim}. At the default $f_{B}=0.15$, the probability of a
false global decision is 0.012 for the proposed scheme, 0.022 for
re-attestation, 0.039 for fixed-allocation WPDM and for OFDM tiers, and
0.092 for reputation fusion alone: reductions of $44\%$, $68\%$ and
$87\%$. The proposed scheme also corrupts $83\%$ fewer leaves per
interval than fixed allocation and about one sixth of the leaves
corrupted under re-attestation. Re-attestation leads above
$f_{B}\approx0.24$, and all schemes perform alike for $f_{B}<0.06$. Demotion shrinks the band a compromised node
holds, so under (T1) its stray transmissions hit fewer honest leaves;
re-attestation removes the node's reports and leaves its band in place.
OFDM subcarriers sit side by side at one level, so the revocation step
stays fixed as trust changes, and the cyclic prefix removes the leakage
evidence. Once a large share of sensors is compromised, the trimmed
mean loses its honest majority and cryptographic revocation becomes the
stronger tool: the two mechanisms complement each other.

\begin{figure}[!t]
\centering
\includegraphics[width=\columnwidth]{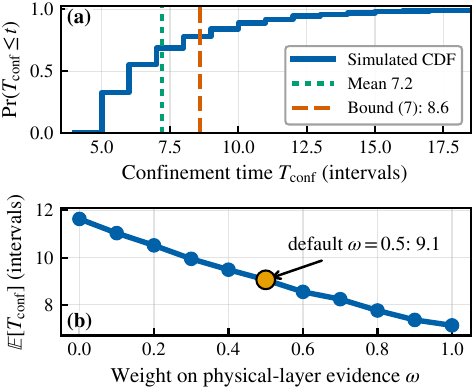}
\caption{Confinement of a node compromised in tier~2. (a)~Empirical CDF
of the confinement time at $\omega=1$, with its mean and the
bound~\eqref{eq:confine}.
(b)~Mean confinement time against the evidence weight $\omega$
of~\eqref{eq:U}.}
\label{fig:confine}
\end{figure}

\subsubsection{Confinement and reputation robustness}

Fig.~\ref{fig:confine}(a) shows the
confinement time with the evidence weight $\omega=1$. The latch then
gives $u_1=0$, and (7) with $\xi_0=1$, $\theta_1=1/3$, $p=0.276$ and
$\eta=0.2$ gives $1/p+\lceil\ln(1/3)/\ln 0.8\rceil=3.6+5=8.6$ intervals.
The simulated mean is 7.2 intervals, below this bound, and every node is
confined after at least 5 intervals, exactly the descent term; the tail
comes from late first detections. Fig.~5(b) shows the mean confinement
time falling steadily from 11.6 intervals at $\omega=0$, where the
neighbour ratings act alone, to 7.1 at $\omega=1$, with 9.1 at the
default $\omega=0.5$, where the latch holds $U_k$ at
$0.5\,\mathrm{TM}_\beta>0$ and the descent takes longer: the
physical-layer evidence shortens confinement by 22\% at $\omega=0.5$
and by 39\% at $\omega=1$. At $f_B=0.15$, false alarms keep honest nodes
confined for $0.6\%$ of intervals and one tier down for $28\%$; this
costs honest bandwidth but not decision reliability, since tier-1
reports are still fused. With $N=20$ and $\delta=0.1$, the trimmed-mean
deviation stays at or below $0.083<\delta$ for every $f_B<\beta$, as
Theorem~1 predicts, and exceeds $\delta$ from $f_B=0.35$. The correlator
bank sees a violation directly; neighbour ratings see it later, through
the reported decisions. The default $\omega=0.5$ keeps the neighbour
ratings, and with them the protection of Theorem~1, as a second source
of evidence for attackers that align their timing (T3). The detector
runs at the DFC, at about 80 exponentials per test over the median
$|\mathcal{N}(\ell_k)|=16$ leaves, and adds zero cost at the sensor;
re-attestation asks every sensor to sign every $T_a$ intervals.

\section{Conclusion}
\label{sec:concl}

Tree depth in WPDM works as a graded access primitive: the waveform
carries the entitlement, and the correlator bank holds the evidence of a
violation. On the lowpass branch, leakage falls fourfold per level, and the optimal
detector, which weights each leaf by this predicted leakage, reaches an
AUC of 0.77 in Class-A noise, against 0.56 for the energy detector.
Detection evidence weakens with depth, so the optimal depth is the
shallowest admissible one, $J=3$, and $J=4$ trades a small detection
loss for finer revocation. At the default compromised fraction, the scheme
reduces false global decisions by $68\%$ relative to fixed-allocation
WPDM and OFDM tiers and by $44\%$ relative to re-attestation. These
results hold under the emission model (T1) and the leaf-domain noise
model of Section~\ref{sec:model}. Future work will focus on timing-aligned
attackers through the direct residual, attribution from per-node
signatures, a measured overhead model, and distributed fusion.


\section*{Acknowledgment}

\noindent This work is supported by the EU Marie Skłodowska-Curie Actions Project “COALESCE” under Grant Number 101130739 and US-Ireland R\&D Partnership Programme Project ``Resilient Networks'' under Grant RI-SFI-23/US/3924.

\end{document}